\newtheorem{theo}{Theorem}
\begin{document}

\title{Entanglement of disjoint blocks in the one dimensional Spin 1 VBS}

\author{Raul A. Santos }
\email{santos@insti.physics.sunysb.edu} 
\author{V. Korepin}
\email{korepin@max2.physics.sunysb.edu}
\affiliation{C.N. Yang Institute for Theoretical Physics,\\
       Stony Brook University,\\
  Stony Brook, NY 11794-3840, USA}

\begin{abstract}
Starting with the valence bond solid (VBS) ground state of the 1D AKLT Hamiltonian, we make a partition of the system in 2 subsystems $A$ and 
$B$, where $A$ is a block of $L$ consecutive spins and $B$ is it's complement. In that setting we compute the partial transpose density 
matrix with respect to $A$, $\rho^{T_A}$. We obtain the spectrum of the transposed density matrix of the VBS pure system. 
Subsequently we define two disjoint blocks, $A$ and $B$ containing $L_A$ and $L_B$ spins respectively, separated by $L$ sites. 
Tracing away the spins which do not belong to $A\cup B$, we find an expression for the reduced density matrix of the $A$ and $B$ 
blocks $\rho(A,B)$. With this expression (in the thermodinamic limit), we compute the entanglement spectrum and other several entanglement 
measures, as the purity $P={\rm tr}(\rho(A,B)^2)$, the negativity $\mathcal{N}$, and the mutual entropy.
\end{abstract}

\maketitle

Entanglement is a fundamental measure of how much quantum effects we can observe and use, and it is the primary resource in quantum 
computation and quantum information processing \cite{BD,L}. Also entanglement plays a role in quantum phase transitions \cite{OAFF,ON}, 
and even it has been experimentally demonstrated that entanglement may affect macroscopic properties of solids \cite{GRAC,V}.
Currently there is considerable interest in quantifying entanglement in various quantum systems. Measures of entanglement, like entanglement
entropy of correlated electrons, fermions in conformal field theory, spin chains, interacting bosons and other models have been studied 
\cite{AEPW,K,VLRK,JK,hlw,cardy}. For a review of entanglement entropy as an area law see \cite{ECP}.
   
An important measure of entanglement is negativity, introduced in \cite{P}. Negativity is a useful quantity to
characterize quantum effects in mixed systems, where the standard mutual information entropy fails to provide a clear separation between 
classical and quantum correlations. Negativity is also useful in the context of quantum information
because it does not change under local manipulations of the system \cite{Vidal}. It is computed from the partial
transpose density matrix with respect to a subsystem $A$, $\rho^{T_A}$ and essentially measures the degree of which $\rho^{T_A}$ fails 
to be positive. Despite the usefulness of negativity, this quantity is usually difficult to compute analytically.

In this paper, we calculate the negativity of blocks in the ground state of the spin chain introduced by Affleck, Kennedy, Lieb, and Tasaki 
(AKLT model) \cite{AKLT}. This state is known as Valence-Bond Solid (VBS). The AKLT model plays an important role in condensed matter physics,
being the first rigorous example of an isotropic spin chain which agrees with the Haldane conjecture \cite{AKLT0}, i.e.  Haldane's  
suggestion that an anti-ferromagnetic Hamiltonian describing half-integer spins is gapless, while for integer spins it has a gap \cite{H}.
AKLT is also central in an specific scheme of quantum computation, namely measurement based quantum computation \cite{VC,BM}.

While this ground state is fourfold degenerate for an open boundary chain of spin 1 at each site, it becomes
unique for a chain consisting of bulk spin-1’s and two spin-1/2's at the boundary \cite{KK}.  
 
An implementation of AKLT in optical lattices was proposed in \cite{GMC}, and the use of AKLT model for universal quantum computation 
was discussed in \cite{VC} and in \cite{WAR}. VBS is also closely related to Laughlin ansatz \cite{L0} and to fractional quantum Hall effect 
\cite{AAH}.

In the first section we quickly review the formulation of the AKLT model and the VBS ground state with its extension to make it unique. We 
also introduce the definition of the density matrix associated with the VBS ground state. In the second section we discuss the case of a 
bipartition of the pure ground state. We re-derive the spectrum of the partial density matrix $\rho_A={\rm tr}_B\rho$ using our simpler approach
obtaining the results already shown in \cite{F}. We also computed the transposed density matrix $\rho^{T_A}$ to illustrate our method. For this 
case we compute the full spectrum, along with the eigenvectors of $\rho^{T_A}$. We also give a value for the negativity in this case, which 
decays to a constant value twice as fast as expected from the correlation functions. In the third section we define two blocks $A$ and $B$, 
separated by $L$ sites. We compute the density matrix of the mixed system $A\cup B$ $\rho(A,B)$, evaluated by tracing out the environmental 
degrees of freedom. We obtain the spectrum of $\rho(A,B)$ and the entanglement spectrum as function of the separation $L$ between the blocks 
and the size of $A$ and $B$. The purity of this system corresponds to the one encountered for maximally mixed states (up to second order 
corrections). In this section we find that the negativity for this system vanish for non adjacent blocks. We also study the case 
of periodic boundary conditions. In the fourth section we obtain the mutual entropy of the system, in the limit of infinity blocks $A$ and $B$.

\medskip

\section{The AKLT model and the VBS state}

The one dimensional AKLT model that we will consider consists of a chain of $N$ spin-$
1$’s in the bulk, and two spin-$1/2$ on the boundary. The location where the spins sit are called sites. We
shall denote by $\vec{S}_k$ the vector of spin-$1$ operators and by
$\vec{s}_b$ spin$-1/2$ operators, where $b = 0, N + 1$. The Hamiltonian is $H=H_{\rm Bulk}+\Pi_{0,1} + \Pi_{N,N+1},$
where the Hamiltonian corresponding to the bulk is given by

\begin{eqnarray}\label{AKLT_bulk}
 H_{\rm Bulk}&=&\sum_{i=1}^{N-1}\frac{1}{6}\left(3\vec{S}_k\cdot\vec{S}_{k+1} + (\vec{S}_k\cdot \vec{S}_{k+1})^2 +2\right),
\end{eqnarray}

\noindent and the sum runs over the lattice sites. The boundary terms $\Pi$ describe interaction of a spin $1/2$
and spin $1$. Each term is a projector on a state with spin $3/2$:

\begin{equation}
 \Pi_{0,1}=\frac{2}{3}(1+\vec{s}_0\cdot\vec{S}_1), \quad \Pi_{N,N +1} =\frac{2}{3}(1 + \vec{S}_N \cdot\vec{s}_{N+1}).
\end{equation}

In order to construct the ground state $|{\rm VBS}\rangle$ of $H$ we can associate two spin $1/2$ variables at each lattice site 
and create the spin $1$ state symmetrizing them. To prevent the formation of spin $2$, we antisymmetrize states between 
different neighbor lattice sites.  Doing this we are sure that this configuration is actually an eigenstate of 
the Hamiltonian, with eigenvalue $0$ (i.e. the projection of $|VBS\rangle$ on the subspace of spin 2-states is zero). Noting that
the Hamiltonian $H$ is positive definite, then we know that this is the ground state.

We can associate a graph to this state, defining dots as spins $1/2$, links as anti-symmetrization, and circles as symmetrization. 
The graph representation of the VBS ground state is then given by

\begin{center}
\begin{figure}[ht!]
 \includegraphics[scale=.7]{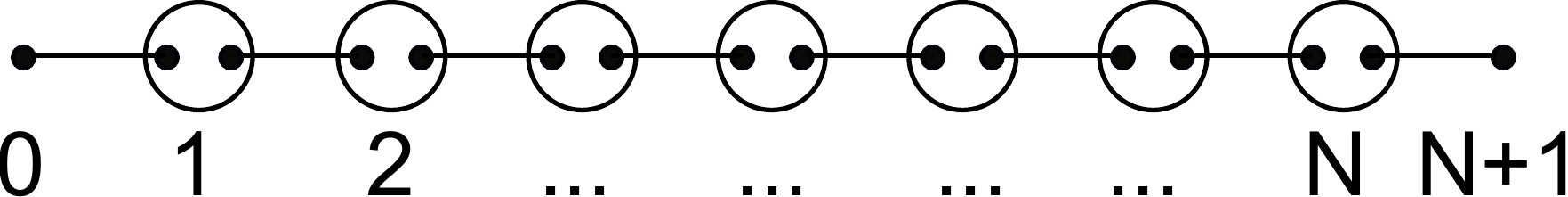}
 \caption{Graphic representation of the 1D VBS state.}
\end{figure}
\end{center}

It is possible to write down a compact expression for this VBS state using bosonic variables. Following \cite{AAH}, we make use of 
the Schwinger boson representation for $SU(2)$ algebra at each site $j$, namely
$S^+_j=a^\dagger_j b_j,\quad S^-_{j}=a_jb^\dagger_j,\quad S^z_{j}=\frac{1}{2}(a_j^\dagger a_j-b_j^\dagger b_j),$
with $[S^z_i,S^\pm_j]=\pm S^\pm_i \delta_{ij}, \quad [S^+_i,S^{-}_j]=+2 S^z_i \delta_{ij}$,
where $a$ and $b$ are two sets of bosonic creation operators, with the usual commutation relations 
$[a_i,a^\dagger_j]=[b_i,b^\dagger_j]=\delta_{ij}$, $[a_i,a_j]=[b_i,b_j]=0$ and correspondingly for $a^\dagger$ and $b^\dagger$. 
This two sets commute in each and every lattice site, i.e. $[a_i,b_j]=[a_i^\dagger,b_j]=0$.
To have a finite dimensional representation of $SU(2)$, it is necessary to impose one more condition on $a$ and $b$, given by
$\frac{1}{2}(a_j^\dagger a_j + b_j^\dagger b_j)= S_j,$  with $S_j$ the value of the spin at site $j$ (in this paper we have $S_j=1$ for 
$j=1..N$ and 1/2 at $j=0,N+1$). In this language the VBS ground state is given by \cite{AAH}

\begin{equation}\label{VBS_state}
 |{\rm VBS}\rangle=\prod_{i=0}^N(a_i^\dagger b_{i+1}^\dagger - a_{i+1}^\dagger b_i^\dagger)|0\rangle.
\end{equation}

\noindent where $|0\rangle=\bigotimes_{sites}|0_a,j\rangle\otimes|0_b,j\rangle$. The state $|0_a,j\rangle$ is defined by $a_j|0_a,j\rangle=0$, 
and it's called the vacuum state for the set of operators $a$. $|0_b,j\rangle$ is defined similarly for the set $b$.
In \cite{KK} the authors prove that this ground state is unique for the Hamiltonian H, then we can construct
the density matrix of the (pure) ground state $\rho=\frac{|{\rm VBS}\rangle\langle {\rm VBS}|}{\langle {\rm VBS}|{\rm VBS}\rangle}.$
This is a one dimensional projector on the $|{\rm VBS}\rangle $ ground state of the Hamiltonian.

\medskip

\section{Density matrix pure state}\label{Neg1}

In order to compute the partial transposed density matrix of the VBS system, we define three subsystems $A$, $B_1$ and $B_2$, where $A$ is a block of $L$ spins 1 and 
$B=B_1\cup B_2$ is it's complement (see Fig. \ref{fig:blocks})

\begin{center}
\begin{figure}[ht!]
 \includegraphics[scale=.7]{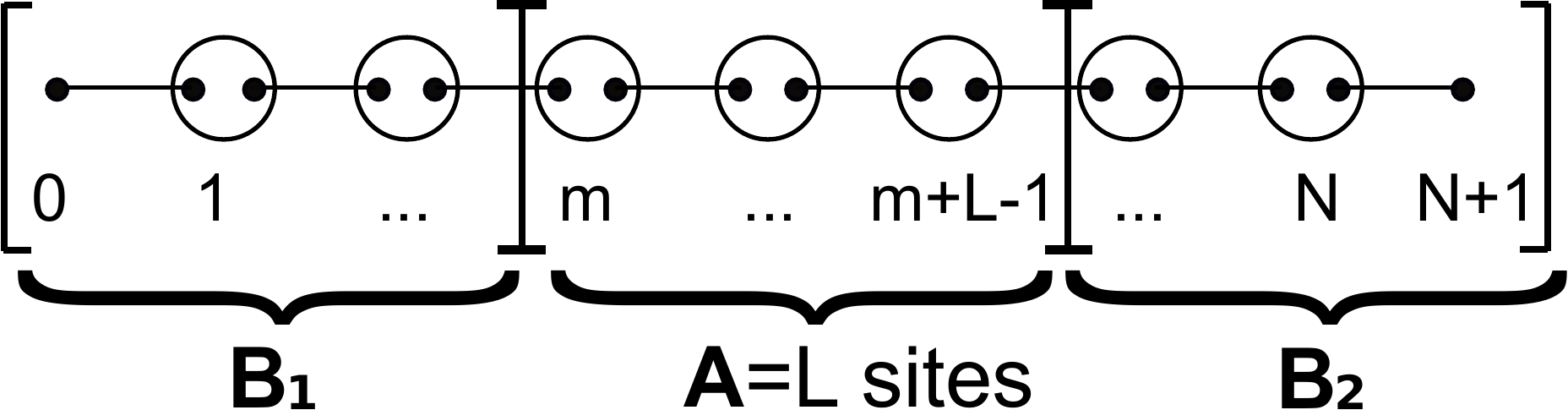}
 \caption{Partition of the 1D chain in three subsystems $A,B_1,B_2$. Subsystem $A$, consisting of $L$ sites. Subsystem $B=B_1\cup B_2$ is 
the complement of $A$.}\label{fig:blocks}
\end{figure}
\end{center}

The partition is defined by (with $1\leq L,m\leq N$) $B_1=\{{\rm sites}\,\,\, i,\,0\leq i \leq m-1\}$, 
$A= \{{\rm sites}\,\, i,\,m\leq i \leq m+L-1\}$, $B_2=\{{\rm sites}\,\, i,\,m+L\leq i \leq N+1\}$

We can split the expression (\ref{VBS_state}) in the corresponding states of the subsystems

\begin{equation}\label{VBS_boundary}
 |{\rm VBS}\rangle=(a_{m-1}^\dagger b_{m}^\dagger - a_{m}^\dagger b_{m-1}^\dagger)(a_K^\dagger b_{K+1}^\dagger - a_{K+1}^\dagger b_K^\dagger)|A,B\rangle.
\end{equation}

\noindent where $K=m+L-1$. $|A\rangle$ and $|B\rangle$ are the VBS states of the $A$  and $B$ subsystems, defined by
$|A\rangle\equiv\left|_m {}^K \right\rangle, \quad |B\rangle\equiv\left|_0 {}^{m-1}\rangle|_{K+1} {}^{N+1}\right\rangle,$
where the states of the form $\left|_I {}^J \right\rangle$ are defined as

\begin{eqnarray}
\left|_I {}^J \right\rangle&\equiv&\prod_{l=I}^{J-1}(a_{l}^\dagger b_{l+1}^\dagger - a_{l+1}^\dagger b_{l}^\dagger)|0\rangle,\\\nonumber
\end{eqnarray}

\noindent respectively. This states describe spins 1 in the bulk (i.e at $l\neq I,J$) and spin 1/2 in the boundary.
To recover the spin 1 at those boundaries sites, we introduce the following notation $(\psi^1_k)^\dagger=a^\dagger_k,$
and $(\psi^2_k)^\dagger=b^\dagger_k$, to have

\begin{eqnarray}\label{definitions}\nonumber
(\psi_m^c)^\dagger(\psi^d_K)^\dagger|A\rangle \equiv |^c A ^d\rangle,\quad
(\psi_{m-1}^c)^\dagger(\psi^d_{K+1})^\dagger|B\rangle \equiv |^c B ^d\rangle \quad (c,d=1,2),
\end{eqnarray}

\noindent then eq. (\ref{VBS_boundary}) becomes
$|{\rm VBS}\rangle=|^2A^1,^1B^2\rangle-|^1A^1,^2B^2\rangle-|^2A^2,^1B^1\rangle+|^1A^2,^2B^1\rangle.$

The four states $|^\sigma A ^\eta\rangle$ belong to the Hilbert space of the block $A$ of length $L$. They span the kernel of 
$H_{\rm Bulk}(A)$ \cite{AKLT}, but they are not orthogonal to each other.
We make use of the classical variable method, introduced in \cite{AAH} (see appendix \ref{appa}), to prove

\begin{eqnarray}
||^\sigma A ^\nu||^2=\frac{1}{4}\left(1-(-1)^{\sigma+\nu}\left(-\frac{1}{3}\right)^L\right),\quad
\langle ^\sigma A ^\nu|^\nu A ^\sigma\rangle=-\frac{1}{2}\left(-\frac{1}{3}\right)^L ~\mbox{for} ~ \sigma\neq\nu,\label{overlap}
\end{eqnarray}

\noindent valid for $L\geq 1$ ($L \in \mathbb{N}$). The norm $||u||^2$ is defined as usual $||u||^2=\langle u|u\rangle$. All other 
combinations vanish. We can perform a rotation of this basis in order to make the overlap (\ref{overlap}) vanish. The new basis is defined by

\begin{eqnarray}\nonumber
|A_0\rangle\equiv\frac{i}{\sqrt{2}}(|^1A^1\rangle+|^2A^2\rangle),\quad\nonumber 
|A_1\rangle\equiv\frac{1}{\sqrt{2}}(|^1A^2\rangle+|^2A^1\rangle),\\\nonumber
|A_2\rangle\equiv\frac{-i}{\sqrt{2}}(|^1A^2\rangle-|^2A^1\rangle).\quad
|A_0\rangle\equiv\frac{1}{\sqrt{2}}(|^1A^1\rangle+|^2A^2\rangle),\label{firstbasis}
\end{eqnarray}

 In this basis (here $\mu,\nu=0..3$) the norm is given by

\begin{eqnarray}\label{norm}
 \langle A_\mu|A_\nu\rangle=\frac{1}{4}\left(1-s_\mu\left(-\frac{1}{3}\right)^L\right)\delta_{\mu\nu},
\end{eqnarray}

\noindent where $s_\mu=(-1,-1,3,-1)$. This four different eigenstates of the bulk Hamiltonian corresponding to the block $A$, can be labeled by the
Bell pair that is formed between the spins 1/2 at the boundary.
The boundary operators $a,b$ which act on the subspace $B$, also organize themselves in irreducible representations, with the only condition
that adjacent boundary operators acting on $A$ and $B$ cannot create a state of spin 2, as required by the VBS ground state symmetry. We define 
the following operators for further simplicity (here, sum over dummy variables $c=1,2$ and $d=1,2$ is assumed)

\begin{eqnarray}\label{B_op}
{T_\mu}^\dagger(i,j)&=&{\psi_i^c}^\dagger(\sigma_\mu)_{cd}{\psi_j^d}^\dagger, \quad (\mu=0..3),
\end{eqnarray}

\noindent with $\sigma_\mu=(i{\rm I},\sigma_1,\sigma_2,\sigma_3)$ being ${\rm I}$ the $2\times 2$ identity matrix and $\sigma_i$ are the Pauli matrices.

The operators  $T_\mu$ keep explicit the symmetry between the operators $a$ and $b$, remaining unchanged (up to phase factors) when we interchange the operators 
$a^\dagger$ and $b^\dagger$. This operation corresponds to take the transpose of $\sigma_\mu$, so $(\sigma_\mu)^T=\sigma_\mu$, for $\mu=0,1,3$
and $(\sigma_2)^T=-\sigma_2$. As the set of operators $T_\mu$ acting on the outer edges of an state is just a linear combination of the states
defined in (\ref{firstbasis}), the states of the form $T_\mu(i,j)\left|_i {}^j \right\rangle$ are a basis for the four dimensional space of 
degenerate ground states of the bulk Hamiltonian (\ref{AKLT_bulk}).

With the introduction of this operators, we can write the identity (see \ref{iden}) (sum convention, with $\mu=0..3$)

\begin{equation}\label{identity}
T_2^{\dagger}(i,i+1)T_2^\dagger(j,j+1)=-\frac{1}{2}{T_\mu}^\dagger(i+1,j){T_\mu}^\dagger(i,j+1),
\end{equation}

Using this identity in eq. (\ref{VBS_boundary}), we find that the VBS state has the decomposition (${T_\nu}^\dagger(m,K)|A\rangle=|A_\nu\rangle$)

\begin{equation}\nonumber
|{\rm VBS}\rangle=-\frac{1}{2}{T_\mu}^\dagger(m-1,K+1){|A_\mu},B\rangle.
\end{equation}

Now we can write the density matrix for the VBS state $\rho=\frac{|{\rm VBS}\rangle\langle VBS|}{\langle{\rm VBS}|{\rm VBS}\rangle}$ 

\begin{equation}\label{density_matrix}
 \rho={T_\mu}^\dagger(m-1,K+1)|A_{\mu},B\rangle\langle A_\alpha,B|T_{\alpha}(m-1,K+1).
\end{equation}

We can define the state $|s\rangle={T_\mu}^\dagger(m-1,K+1){|A_\mu},B\rangle$. In terms of this state, the density matrix (\ref{density_matrix})
takes the form $\rho=|s\rangle\langle s|$, and the eigenvector is clearly $|s\rangle$ with eigenvalue 1. this is natural because so far we have 
just taken another basis to represent $\rho$, which was already a projector onto the VBS ground state.

It's important to note that if we trace the $B$ block in expression (\ref{density_matrix}), we get the partial density matrix
respect to the A subsystem $\rho_A=|A_\mu\rangle\langle A_\mu|$. This expression was already found in \cite{F}. The von Neumman or entanglement
entropy is given by $S_A=-{\rm Tr}(\rho_A\ln\rho_A)=-\lambda_0(L)\ln\lambda_0(L)+3\lambda_1(L)\ln\lambda_1(L)$ and scales with the length of 
the boundary as expected.
The entanglement spectrum for $\rho_A=|A_\mu\rangle\langle A_\mu|$ is $\xi_1=\ln\left(\frac{4}{1+3(-3)^{-L}}\right)$ no degeneracy and 
$\xi_2=\ln\left(\frac{4}{1-(-3)^{-L}}\right)$ with triple degeneracy ($L\neq0$) \cite{F}.

We transpose the elements belonging to the $A$ subspace to obtain the partial transposed matrix $\rho^{T_A}$

\begin{equation}
 \rho^{T_A}={T_\mu}^\dagger(m-1,K+1)|A_{\alpha},B\rangle\langle A_{\mu},B|T_{\alpha}(m-1,K+1)
\end{equation}

\noindent where it is understood that all the operators $T_\mu$ are evaluated at the boundary sites $m-1$ and $K+1$ and $L=K+1-m\geq 1$.
With this explicit form of $\rho^{T_A}$, we can compute the eigenvectors and eigenvalues.

\subsection*{Eigenvectors and negativity of $\rho^{T_A}$}

Defining 

\begin{eqnarray}\nonumber
 \lambda_0(L)\equiv\frac{1}{4}\left(1+3\left(-\frac{1}{3}\right)^L\right)
 \quad\mbox{and} \quad \lambda_1(L)\equiv\frac{1}{4}\left(1-\left(-\frac{1}{3}\right)^L\right)
\end{eqnarray}

\noindent we have:

Eigenvectors of $\rho^{T_A}$ corresponding to $\lambda_1(L)$ (6-fold degeneracy) (here all the $T^\dagger_\mu$ operators act in the sites of
B that are nearest neighbors of the block A, namely $T^\dagger_\mu=T^\dagger_\mu(m-1,K+1)$)

\begin{eqnarray}\nonumber
&|e_1\rangle={T_0}^\dagger|A_0,B\rangle; \quad |e_4\rangle={T_3}^\dagger|A_0,B\rangle-{T_0}^\dagger|A_3,B\rangle,&\\\nonumber
&|e_2\rangle={T_3}^\dagger|A_3,B\rangle; \quad |e_5\rangle={T_1}^\dagger|A_0,B\rangle-{T_0}^\dagger|A_1,B\rangle,&\\
&|e_3\rangle={T_1}^\dagger|A_1,B\rangle; \quad |e_6\rangle={T_1}^\dagger|A_3,B\rangle+{T_3}^\dagger|A_1,B\rangle.&
\end{eqnarray}

Eigenvectors of $\rho^{T_A}$ corresponding to $-\lambda_1(L)$ (3-fold degeneracy)

\begin{eqnarray}\nonumber
|e_7\rangle={T_0}^\dagger|A_3,B\rangle+{T_3}^\dagger|A_0,B\rangle,\\\nonumber
|e_8\rangle={T_0}^\dagger|A_1,B\rangle+{T_1}^\dagger|A_0,B\rangle,\\
|e_9\rangle={T_1}^\dagger|A_3,B\rangle-{T_3}^\dagger|A_1,B\rangle.
\end{eqnarray}

Eigenvectors of $\rho^{T_A}$ corresponding to $\sqrt{\lambda_0(L)\lambda_1(L)}$ (3-fold degeneracy)

\begin{eqnarray}\nonumber
|e_{10}\rangle={T_1}^\dagger|A_2,B\rangle-{T_2}^\dagger|A_1,B\rangle,\\\nonumber
|e_{11}\rangle={T_2}^\dagger|A_0,B\rangle+{T_0}^\dagger|A_2,B\rangle,\\
|e_{12}\rangle={T_3}^\dagger|A_2,B\rangle-{T_2}^\dagger|A_3,B\rangle.
\end{eqnarray}

Eigenvectors of $\rho^{T_A}$ corresponding to $-\sqrt{\lambda_0(L)\lambda_1(L)}$ (3-fold degeneracy)

\begin{eqnarray}\nonumber
|e_{13}\rangle={T_1}^\dagger|A_2,B\rangle+{T_2}^\dagger|A_1,B\rangle,\\\nonumber
|e_{14}\rangle={T_2}^\dagger|A_0,B\rangle-{T_0}^\dagger|A_2,B\rangle,\\
|e_{15}\rangle={T_2}^\dagger|A_3,B\rangle+{T_3}^\dagger|A_2,B\rangle.
\end{eqnarray}

Finally the eigenvector of $\rho^{T_A}$ corresponding to $\lambda_0(L)$ is $|e_{16}\rangle={T_2}^\dagger|A_2,B\rangle.$
The negativity of the partial transposed matrix $\rho^{T_A}$ obtained from the pure VBS state as the sum of negative eigenvalues, 
namely

\begin{eqnarray}\label{negativity_a}\nonumber
 \mathcal{N}&=&\frac{3}{4}\left(1-(-3)^{-L}+\sqrt{\left(1-(-3)^{-L}\right)(1+3(-3)^{-L})}\right),\\
  &=&3(\lambda_1(L)+\sqrt{\lambda_1(L)\lambda_0(L)}),
\end{eqnarray}

\noindent valid for $ L\geq 1$. This expression simplifies in the case $L=1$ to $\mathcal{N}_{L=1}=\frac{3}{4}(1-(-3)^{-1})=1$. In the thermodynamic limit 
($L\rightarrow\infty$) the negativity approach a constant value exponentially fast

\begin{equation}
 \mathcal{N}_{L\rightarrow\infty}=\frac{3}{2}\left(1-\frac{3}{8}\left(-\frac{1}{3}\right)^{2L}\right).
\end{equation}

\subsection*{Special case $L=0$.}

If we evaluate the expression (\ref{negativity_a}) in the case $L=0$, we get zero. This is obvious because in this treatment, $L=0$
means that there is no subspace $A$ to transpose. Nevertheless there is another case which we have not considered so far and is analogous 
to take the block $A$ to be empty. Namely instead of making a partition in three subspaces, we can consider a partition into 2 subspaces.
Let's call this subspaces $B_1$ and $B_2$. The site where we make the partition is $m$ (with $m$ inside the chain). Then the partition is 
given by

\begin{eqnarray}\nonumber
 B_1=\{{\rm sites}\,\, i,0\leq i \leq m\}\quad
 B_2=\{{\rm sites}\,\, i,m+1\leq i \leq N+1\}
\end{eqnarray}

 The density matrix is trivially written as (using the notation introduced in (\ref{definitions}))

\begin{eqnarray}\label{rhoB1B2}
\rho=(\delta^{a}_{c}\delta^{b}_{d}-\delta^{a}_{d}\delta^{b}_{c})|B_1^{a},^{b}B_2\rangle\langle B_1^{c},^{d}B_2|
\end{eqnarray}

The transposed density matrix with respect to $B_2$  is

\begin{eqnarray}
\rho^{T_{B_2}}=(\delta^{a}_{c}\delta^{b}_{d}-\delta^{a}_{d}\delta^{b}_{c})|B_1^{a},^{d}B_2\rangle\langle B_1^{c},^{b}B_2|.
\end{eqnarray}

The eigenvectors of $\rho^{T_{B_2}} $ are, for $\lambda_1^{\{0\}}=\frac{1}{2}$ (three fold degeneracy), 
$|e_k\rangle=T_k^\dagger(m,m+1)|B_1,B_2\rangle$ ($k=1,2,3$), while for the negative eigenvalue $\lambda_2^{\{0\}}=-\frac{1}{2}$ (no degeneracy)
$|e_0\rangle=T_0^\dagger(m,m+1)|B_1,B_2\rangle$. The negativity in this case is $\mathcal{N}=1/2$.

\section{Density matrix for the mixed system of 2 disjoint blocks}\label{N2systems}

\subsection{Open boundary conditions}

So far we have studied the density matrix for the pure VBS system. In this section we want to extend our results to the case of mixed systems.
We will study the mixed system composed of two blocks $A$ and $B$ of length $L_A$ and $L_B$, obtained by tracing away the lattice sites which 
do not belong to these blocks in the VBS ground state. This situation is described in Fig \ref{fig:mutualp}.

\begin{center}
\begin{figure}[ht!]
 \includegraphics[scale=1]{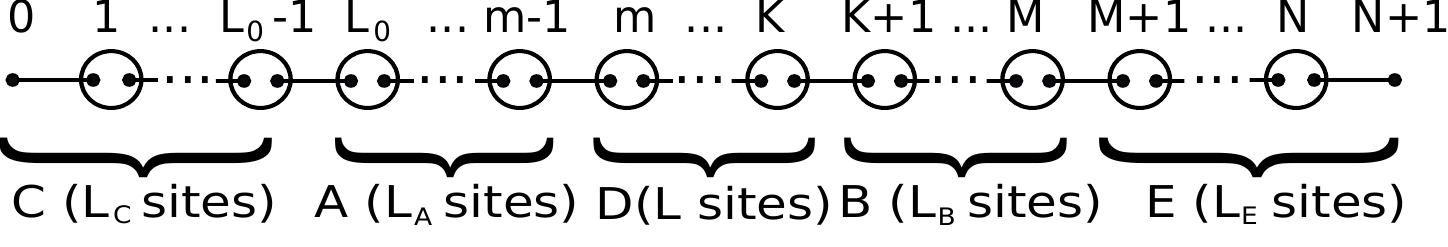}
 \caption{We made a partition of the VBS state in 5 sectors, labeled $A,B,C,D$ and $E$ as shown in the figure. To obtain the density 
matrix for the blocks $A$ and $B$, we trace away the spin variables at the sites inside $C,D$ and $E$.}\label{fig:mutualp}
\end{figure}
\end{center}

To define the blocks, we partition the $N+2$ sites of the chain into five different subsets, $A,B,C,D$ and $E$, of different length.

Given $I,J,K,M,N$ five positive integers ordered as $0< I < J < K < M < N+1$, we define:

\begin{itemize}
 \item Block C = $\{$sites $i$, $0 \leq i \leq I-1\}$, with length $L_C=I$,
 \item Block A = $\{$sites $i$, $I \leq i \leq J-1\}$, with length $L_A=J-I$,
 \item Block D = $\{$sites $i$, $J \leq i \leq K-1\}$, with length $L=K-J$,
 \item Block B = $\{$sites $i$, $K \leq i \leq M-1\}$  with length $L_B=M-K$ and 
 \item Block E = $\{$sites $i$, $M \leq i \leq N+1\}$  with length $L_E=N+2-M$.
\end{itemize}

We are interested in the density matrix for the mixed system of two different blocks. In order to compute the density matrix
we have to trace away the sites outside the corresponding blocks. To do that we use the following results for the bulk states

\begin{eqnarray}\label{orthogonality}
 &\langle D_\mu|D_\nu\rangle=\delta_{\mu\nu}\lambda_\mu(L)\quad \mbox{with } \lambda_\mu=\frac{1}{4}(1+z(L)s_\mu),&\\\nonumber
&z(L)=\left(-\frac{1}{3}\right)^L;  \quad s_\mu=(-1,-1,3,-1), \\
&\langle C,E]T_\mu(I-1,M){T_\nu}^\dagger(I-1,M)\left[C,E\right\rangle=\delta_{\mu\nu}.&
\end{eqnarray}

\subsection*{Density matrix of the blocks C \& E}

The simplest case occur when we trace away the $A, D$ and $B$ blocks. The density matrix for the $C$ and $E$ blocks is

\begin{eqnarray}
\rho_{CE}&=&{\rm Tr}_{ABD}\left\{\frac{|{\rm VBS}\rangle\langle {\rm VBS}|}{\langle {\rm VBS}|{\rm VBS}\rangle}\right\},\\\nonumber
 &=&\lambda_\mu T^\dagger_\mu(I-1,M)|C,E\rangle\langle C,E|T_\mu(I-1,M),\\\nonumber
 &\equiv&\lambda_\mu|[C,E]_\mu\rangle\langle[C,E]_\mu|.
\end{eqnarray}

\noindent the partial transposed density matrix respect to the $E$ system is

\begin{eqnarray}
\rho_{CE}^{T_E}=(\lambda_\mu-(-1)^\mu\frac{\lambda_2-\lambda_1}{2})|[C,E]_\mu\rangle\langle[C,E]_\mu|.
\end{eqnarray}

This is a sum of projector operators (a consequence of (\ref{orthogonality})). The negativity is non vanishing just for the case $L_A+L_B+L=0$, 
when we have $\mathcal{N}=1/2$.

\subsection*{Density Matrix of the blocks A \& B, case $L\geq 1$.}

In this case we compute the density matrix for blocks $A$ and $B$. We obtain this density matrix by tracing away the states 
on the $C,D$ and $E$ subspaces.

\begin{equation}\label{matrix_mix}
 \rho_{AB}={\rm Tr}_{CDE}\left\{\frac{|{\rm VBS}\rangle\langle {\rm VBS}|}{\langle {\rm VBS}|{\rm VBS}\rangle}\right\}.
\end{equation}

Using the identity (\ref{idfour}) (See \ref{iden}), we can write the VBS state as a linear combination of products between the different four 
fold degenerate ground states of the bulk Hamiltonian (\ref{AKLT_bulk}) in the form:

\begin{eqnarray}\nonumber
&|{\rm VBS}\rangle=M_{\mu\nu\rho\sigma} T_\sigma^\dagger(I-1,M)|C,A_\mu,D_\nu,B_\rho,E\rangle,\quad \mbox{with}&\\
&M_{\mu\nu\rho\sigma}=(-1)^\nu(g^{\mu\nu}\delta_{\rho\sigma}+g^{\nu\rho}\delta_{\mu\sigma}-g^{\nu\sigma}\delta_{\mu\rho}+
g^{\nu\alpha}\epsilon_{\mu\alpha\rho\sigma}).&
\end{eqnarray}

\noindent here we have introduced three types of tensors, the Kronecker delta symbol in 4 dimensions $\delta_{\alpha\beta}$, the diagonal tensor 
$g^{\mu\nu}=(-1,+1,+1,+1)$ and the Levi Civita symbol in four dimensions $\epsilon_{\mu\nu\rho\sigma}$, which is a totally antisymmetric tensor, 
with $\epsilon_{\mu\nu\rho\sigma}$= sign of permutation $(\mu,\nu,\rho,\sigma)$ if $(\mu,\nu,\rho,\sigma)$ is a permutation of $(0,1,2,3)$, 
and zero otherwise.

Using this representation of the VBS state, it's easy to write down the density matrix (\ref{matrix_mix}) using the 
{or}\-{tho}\-{go}\-{na}\-{li}\-{ty} of the bulk ground states (\ref{orthogonality}). We find that the density matrix $\rho_{AB}$ is

\begin{equation}\label{dens_matrix_mix}
\rho_{AB}= M_{\mu\nu\rho\sigma} M_{\alpha\nu\beta\sigma}|A_\mu,B_\rho,\rangle\langle A_\alpha,B_\beta|,
\end{equation}

\noindent with the tensor $M_{\mu\nu\rho\sigma}M_{\alpha\nu\beta\sigma}$ given explicitly by  (summation over dummy variables $\nu$ 
and $\sigma$ is assumed)

\begin{eqnarray}\label{dens_matrix_mix2}
 M_{\mu\nu\rho\sigma}M_{\alpha\nu\beta\sigma}=\delta_{\mu\alpha}\delta_{\rho\beta}-z(L)[\delta_{\mu\rho}\delta_{\alpha\beta}-\delta_{\rho\alpha}\delta_{\mu\beta}]S_{\mu\alpha}
+z(L)\epsilon_{\alpha\beta\mu\rho}\left(\frac{S_{\rho\beta}-S_{\mu\alpha}}{2}\right),
\end{eqnarray}

\noindent with $S_{\mu\alpha}=(s_\mu+s_\alpha)/{2}$. We can identify two parts in (\ref{dens_matrix_mix}), the first term which does not depend on 
$z$ and the rest which is linear in $z$. The first term is a projector on the ground states of the bulk of $A$ and $B$, namely

\begin{equation}\label{rho0}
 \rho_0(A,B)=\delta_{\mu\alpha}\delta_{\rho\beta}|A_\mu,B_\rho,\rangle\langle A_\alpha,B_\beta|,
\end{equation}

\noindent while all the other terms, proportional to $z(L)$, have vanishing trace. From this expression, we can compute the entanglement spectrum
associated with $\rho(A,B)$, in  the limit $L,L_1,L_2\gg1$. This density matrix has rank 16, and is exponentially close to a maximally mixed
state. the eigenvalues are (using $x_1=(-3)^{-L_1}$, $x_2=(-3)^{-L_2}$ and $z=(-3)^{-L}$)

\begin{eqnarray}\nonumber
\{\lambda_i\}_{i=1}^{11}=\frac{1-x_1-x_2-z}{16}, &\quad& \{\lambda_i\}_{i=12}^{14}=\frac{1+3x_1+3x_2+3z}{16},\\
\lambda_{15,16}=\frac{1+x_1+x_2+z}{16}&\pm&\frac{1}{8}\sqrt{z^2+(x_1+x_2)(x_1+x_2-z)}.
\end{eqnarray}

\noindent The entanglement spectrum of $\rho(A,B)$ is $\xi_i=-\ln\lambda_i$. Explicitly we have

\begin{eqnarray}\nonumber
\{\xi_i\}_{i=1}^{11}=4\ln2+x_1+x_2+z, &\quad& \{\xi\}_{i=12}^{14}=4\ln2-3x_1-3x_2-3z,\\
\xi_{15,16}=4\ln2 -x_1-x_2-z&\mp&\sqrt{z^2+(x_1+x_2)(x_1+x_2-z)}.
\end{eqnarray}

\noindent The purity, defined as $\gamma={\rm Tr}(\rho^2)$ corresponds in this limit to the purity of a maximally mixed state, up to terms
of second order in $x_1,x_2$ and $z$. We have $\gamma=\frac{1}{16}+\mathcal{O}(2)$. The general results for arbitrary $L,L_1,L_2$ are given 
in appendix \ref{result}.

If we call $\rho_1(A,B)$ to all the linear terms in $z(L)$ on 
(\ref{dens_matrix_mix}), we can write for brevity

\begin{equation}\label{breve}
\rho_{AB}= \rho_0(A,B)+z(L)\rho_1(A,B).
\end{equation}

From the expressions (\ref{dens_matrix_mix}) and (\ref{dens_matrix_mix2}) we can obtain the partial transposed density matrix with respect to 
the $A$ subsystem.

\begin{eqnarray}\label{partialtrans1}
\rho^{T_{A}}_{AB}=\bigg[\delta_{\mu\alpha}\delta_{\rho\beta}
-z(L)\left([\delta_{\alpha\rho}\delta_{\mu\beta}-\delta_{\rho\mu}\delta_{\alpha\beta}]S_{\mu\alpha}
-\epsilon_{\mu\beta\alpha\rho}\left(\frac{S_{\rho\beta}-S_{\mu\alpha}}{2}\right)\right)\bigg]|A_\mu,B_\rho,\rangle\langle A_\alpha,B_\beta|.
\end{eqnarray}

\noindent from where, comparing with equations (\ref{dens_matrix_mix}) and (\ref{dens_matrix_mix2}), we learn that 

\begin{equation}\label{equivalence}
\rho^{T_{A}}_{AB}(z)=\rho_{AB}(-z).
\end{equation}
                                                           
Given this result, and the fact that $\rho_{AB}(-z)$ is also a density matrix (proved in the following theorem), the negativity vanishes for
$L>0$.

\begin{theo}
The negativity of the transposed density matrix $\rho^{T_{A}}_{AB}(z(L))$ is strictly zero for two blocks separated by $L>0$.
\end{theo}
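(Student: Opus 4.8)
The plan is to use the structural identity (\ref{equivalence}) to turn the positivity question into a single–parameter problem. Since $\rho^{T_A}_{AB}(z)=\rho_{AB}(-z)$ and the $z$–linear piece $\rho_1(A,B)$ in (\ref{breve}) is traceless, the partial transpose has unit trace, so the negativity equals $\tfrac12(\|\rho^{T_A}_{AB}\|_1-1)=\sum_{\lambda<0}|\lambda|$ and vanishes precisely when $\rho_{AB}(-z)\succeq 0$. The whole statement therefore reduces to showing that the matrix obtained from the genuine density matrix $\rho_{AB}$ by the single replacement $z(L)\mapsto -z(L)$ stays positive semidefinite for every $L\ge 1$.

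First I would exploit that $\rho_{AB}(w)=\rho_0(A,B)+w\,\rho_1(A,B)$ is affine in the parameter $w$. Because the cone of positive semidefinite matrices is convex and the preimage of a convex set under an affine map is convex, the set $\{w:\rho_{AB}(w)\succeq 0\}$ is an interval $[w_-,w_+]$. The physical separations populate it with the positive semidefinite points $w=z(L)\in\{-\tfrac13,\tfrac19,-\tfrac1{27},\dots\}$, so $[-\tfrac13,\tfrac19]\subseteq[w_-,w_+]$. The values I actually need, $w=-z(L)\in\{\tfrac13,-\tfrac19,\tfrac1{27},\dots\}$, all lie inside $[-\tfrac13,\tfrac19]$ with the single exception of $w=\tfrac13$, which is the transpose of the $L=1$ configuration. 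Hence positivity is automatic for every $L\ge 2$, and only the endpoint $w=\tfrac13$ needs a direct argument.

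For that direct check I would read off the explicit spectrum. Substituting $z\mapsto -z$ in the eigenvalues of $\rho_{AB}$ gives, for $\rho^{T_A}_{AB}$, an $11$–fold value $(1-x_1-x_2+z)/16$, a $3$–fold value $(1+3x_1+3x_2-3z)/16$, and the pair
\begin{equation}\nonumber
\frac{1+x_1+x_2-z}{16}\pm\frac18\sqrt{z^2+(x_1+x_2)(x_1+x_2+z)}.
\end{equation}
With $z\in[-\tfrac13,\tfrac19]$ for $L\ge1$ and $x_1,x_2\to 0$ in the large–block limit, the first two families are bounded below by $\tfrac{1}{24}>0$, so the only binding constraint is the lower member of the pair; its nonnegativity is the inequality $1-z\ge 2\sqrt{z^2+(x_1+x_2)(x_1+x_2+z)}$, which in the block–thermodynamic limit is $1-z\ge 2|z|$ and holds throughout $[-\tfrac13,\tfrac19]$. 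This yields $\rho^{T_A}_{AB}\succeq 0$ and hence $\mathcal{N}=0$.

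The hard part will be the lower eigenvalue of the square–root pair together with the finite–block corrections. Every other eigenvalue is linear in $z$ and trivially controlled, but this one is the smallest and its nonnegativity is a genuine inequality rather than an identity; away from the strict large–block limit one must also verify that the normalization factors $\lambda_\mu(L_A)$ and $\lambda_\rho(L_B)$ coming from the non-orthonormality (\ref{norm}) of the $|A_\mu\rangle$ and $|B_\rho\rangle$ do not drive it negative. This is exactly where the exact spectrum of appendix \ref{result} is required, and it is most delicate precisely at the endpoint $w=\tfrac13$, i.e. the $L=1$ case singled out above.
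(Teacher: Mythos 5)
Your reduction via (\ref{equivalence}) and the convexity idea are exactly the mechanism of the paper's proof, but your choice of extreme points leaves a genuine hole at $L=1$. By restricting the ``known positive'' members of the family to the physical separations $L\geq 1$, the interval you can certify is only $[-\tfrac13,\tfrac19]$, and, as you note yourself, the value you need for $L=1$, namely $w=+\tfrac13$, falls outside it. Your patch for that case is not a proof: the spectrum into which you substitute $z\mapsto-z$ is the paper's \emph{asymptotic} spectrum, valid only to first order in $x_1,x_2,z$ in the limit $L_A,L_B\gg1$, so the inequality $1-z\geq 2|z|$ establishes nothing for finite blocks; you concede this by calling the finite-block corrections ``the hard part''. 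Since the theorem is asserted for all block lengths, the $L=1$ case remains unproven in your argument.

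The missing idea is that the family $\rho_{AB}(w)$ has one more genuinely positive member than the separations $L\geq1$ provide: $w=1$, realized by \emph{adjacent} blocks ($L=0$). The paper's ``Special case $L=0$'' subsection shows that the reduced density matrix of two adjacent blocks is exactly (\ref{dens_matrix_mix}) with $z(L)=1$, so $\rho_{AB}(1)$ is positive semidefinite. Taking the convex combination of $\rho_{AB}(1)$ and $\rho_{AB}(-\tfrac13)$ with weight $\lambda=\tfrac14\bigl(1-3(-\tfrac13)^L\bigr)\in[0,1]$ gives $\rho_0(A,B)-z(L)\rho_1(A,B)=\rho_{AB}(-z(L))$ for every $L\geq1$; the certified interval becomes $[-\tfrac13,1]$, which contains $w=\tfrac13$, and the $L=1$ case is settled uniformly with all the others, with no recourse to eigenvalue formulas at all. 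With that single modification your argument closes; without it, it covers only $L\geq2$.
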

\begin{proof} 
Consider the family of density matrices $\rho_{AB}(z)=\rho_0(A,B)+z(L)\rho_1(A,B)$,  defined in eq. (\ref{breve}). Recalling
that the space of density matrices is convex \cite{Nielsen}, meaning that for two density matrices $\rho_1,\rho_2$, the operator
$\tilde{\rho}=\lambda\rho_1+(1-\lambda)\rho_2$ is also a density matrix for $\lambda\in[0,1]$, we proceed as follows.
We take the first two members of the family $\rho_{AB}(z)$, namely $\rho_{AB}(z_1)$ and $\rho_{AB}(z_2)$ for fixed $z_1=1,z_2=-1/3$
(We can take any pair different $z_1$ and $z_2$, but the greater $z$ is achieved for $z_1=1$, $z_2=-1/3$ (or vice versa)). 
By the convexity of the space of density matrices, $\bar{\rho}=\lambda\rho_{AB}(z_1)+(1-\lambda)\rho_{AB}(z_2)$ is also a density
matrix. Using (\ref{breve}), we write explicitly $\bar{\rho}=\rho_0(A,B)+(\lambda z_1 + (1-\lambda)z_2)\rho_1(A,B)$. We can choose
$\lambda=\frac{1}{4}(1-3(-\frac{1}{3})^L)\in [0,1]$ for $L\geq 1$. Using this $\lambda$, we find 

\begin{equation}
 \bar{\rho}=\rho_0(A,B)-z(L)\rho_1(A,B).
\end{equation}

Then $\bar{\rho}=\rho_{AB}(-z)$ is also a density matrix, for $L\geq 1$. Now, by (\ref{equivalence}), $\rho^{T_{A}}_{AB}(z)$
is also density matrix for $z<1$ $(L>0)$. Then the negativity (sum of negative eigenvalues) of $\rho^{T_{A}}_{AB}(z)$ vanishes.
\end{proof}

\subsection*{Special case $L=0$.}

As in the previous section, we analyze separately the case $L=0$. In this case the block $D$ is not present and we cannot take a trace
over it. 

We can study this scenario using the following identity (see \ref{iden})

\begin{eqnarray}\nonumber\label{ID2}
&T_2^\dagger(i,i+1)T_2^\dagger(j,j+1)T_2^\dagger(k,k+1)=
(-1)^\nu m_{\mu\nu\lambda}{T_\mu}^\dagger(i+1,j){T_\nu}^\dagger(j+1,k){T_\lambda}^\dagger(i,k+1)&\\
&\mbox{where} \quad  m_{\mu\nu\lambda}=\frac{-1}{4}(g_{\mu\nu}\delta_{2\lambda}+g^{2\nu}\delta_{\mu\lambda}-g_{\lambda\nu}\delta_{2\mu}+g^{\nu\alpha}\epsilon_{\mu\alpha 2\lambda})&
\end{eqnarray}

The ${\rm VBS}$ state splits into

\begin{equation}\label{splitVBS}
|{\rm VBS}\rangle=T_2^\dagger(i,i+1)T_2^\dagger(j,j+1)T_2^\dagger(k,k+1)|C,A,B,E\rangle.
\end{equation}
\noindent with $i=L_0-1$, $j=L_1+L_0-1$ and $k=L_0+L_1+L_2-1$ and $L_0$ being the number of sites in $C$, $L_1$ the number of sites in
$A$, $L_2$ the number of sites in $B$. Here the states 
$|C\rangle,|A\rangle,|B\rangle,|E\rangle$ are defined as in the previous section taking $m-1=K$ ($L=0$).

Using the identity (\ref{ID2}), the equation (\ref{splitVBS}) becomes

\begin{eqnarray}\nonumber
|{\rm VBS}\rangle=(g_{\mu\nu}\delta_{2\lambda}+g^{2\nu}\delta_{\mu\lambda}-g_{\lambda\nu}\delta_{2\mu}+g^{\nu\alpha}\epsilon_{\mu\alpha 2\lambda}){T_\lambda}^\dagger(i,k+1)|C,A_\mu,B_\nu,D\rangle
\end{eqnarray}

Now, given that (see appendix \ref{appa})

\begin{equation}
 \langle C,D|{T_\lambda}(i,k+1){T_{\lambda'}}^\dagger(i,k+1)|C,E\rangle=\delta_{\lambda\lambda'},
\end{equation}

\noindent we can write the normalized density matrix ${\rm Tr}_{C,E}\rho\equiv\rho_{AB}$ as

\begin{eqnarray}\nonumber
\rho_{AB}=\bigg(\delta_{\mu\alpha}\delta_{\rho\beta}-[\delta_{\mu\rho}\delta_{\alpha\beta}-\delta_{\rho\alpha}\delta_{\mu\beta}]S_{\mu\alpha}
+\epsilon_{\alpha\beta\mu\rho}\left(\frac{S_{\rho\beta}-S_{\mu\alpha}}{2}\right)\bigg)|A_\mu,B_\nu\rangle\langle A_\alpha,B_\beta|
\end{eqnarray}

This expression is analogous to the (\ref{dens_matrix_mix}), with $z(L)=1$ ($L=0$). From this result, the transposed
density matrix respect to $A$ is now given by 

\begin{eqnarray}\label{rhotrans1}
\rho_{AB}^{T_a}=\bigg(\delta_{\mu\alpha}\delta_{\rho\beta}+[\delta_{\mu\rho}\delta_{\alpha\beta}-\delta_{\rho\alpha}\delta_{\mu\beta}]S_{\mu\alpha}
-\epsilon_{\alpha\beta\mu\rho}\left(\frac{S_{\rho\beta}-S_{\mu\alpha}}{2}\right)\bigg)|A_\mu,B_\nu\rangle\langle A_\alpha,B_\beta|
\end{eqnarray}

Using this expression it is possible to compute the eigenvalues of the transposed density matrix and the negativity. For the results at finite
size of the block $A$ ($L_1$) and block $B$ ($L_2$) see appendix (section \ref{result}). The negativity in the asymptotic limit 
$L_1\rightarrow\infty,L_2\rightarrow\infty$ is given by

\begin{equation}
 \mathcal{N}_{L_1,L_2\rightarrow\infty}=\frac{1}{2}-\frac{3}{4}\left(\left(-\frac{1}{3}\right)^{2L_1}+\left(-\frac{1}{3}\right)^{2L_2}\right).
\end{equation}

\subsection{Periodic Boundary Conditions}\label{PBC}

Using the same technology developed previously, we can also analyze the case of periodic boundary conditions. This state is unique, given that 
the coordination number for each spin is two \cite{KK}. 

In this state, we make a partition in four sectors, labeled by their length as $L_A, L_B, L_C, L_D$, with $L_A+L_B+L_C+L_D=L$ the total length
of the system. We trace away states from sectors that do not belong to $A\cup B$ (See fig. \ref{fig:bc}). 
\begin{center}
\begin{figure}[ht!]
 \includegraphics[scale=.9]{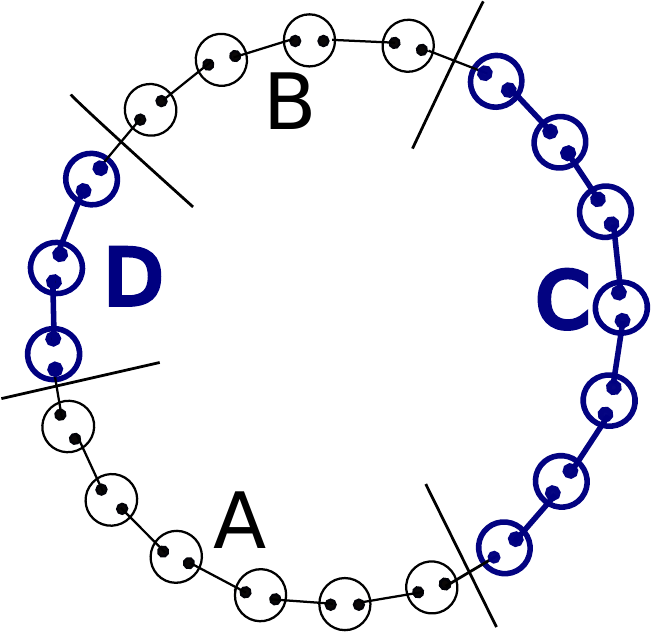}
 \caption{We trace blocks C and D, leaving a reduced density matrix in terms of the states of the A and B blocks.}\label{fig:bc}
\end{figure}
\end{center}

We split the $|{\rm VBS}\rangle$ as in the previous section, but now the difference is that the norm of both states that we trace out, namely
the block C and D, is nontrivial, each one contributing with a factor $\langle C_\alpha| C_\beta'\rangle=\lambda_\alpha(L_C)\delta_{\alpha\beta}$ and
$\langle D_\alpha| D_\beta\rangle=\lambda_\alpha(L_D)\delta_{\alpha\beta}$.

The VBS state can be rewritten as $|{\rm VBS}\rangle=M_{\mu\nu\rho\sigma}|C_\sigma,A_\mu,D_\nu,B_\rho\rangle$, with
$M_{\mu\nu\rho\sigma}=(-1)^\nu(g^{\mu\nu}\delta_{\rho\sigma}+g^{\nu\rho}\delta_{\mu\sigma}-g^{\nu\sigma}\delta_{\mu\rho}+
g^{\nu\alpha}\epsilon_{\mu\alpha\rho\sigma}).$
The reduced density matrix in this case is 

\begin{eqnarray}
\rho_{AB}=M_{\mu\nu\rho\sigma}M_{\alpha\nu\beta\sigma}\lambda_\nu(L_D)\lambda_\sigma(L_C)|A_\mu,B_\rho\rangle\langle A_{\alpha},B_{\beta}|.
\end{eqnarray}

The tensor $W_{\mu\rho\alpha\beta}=M_{\mu\nu\rho\sigma}M_{\alpha\nu\beta\sigma}\lambda_\nu(L_D)\lambda_\sigma(L_C)$ is given explicitly by

\begin{eqnarray}\nonumber
W_{\mu\rho\alpha\beta}=\delta_{\mu\alpha}\delta_{\rho\beta}\Lambda_{\alpha\beta}(z_d,z_c)-\delta_{\alpha\rho}\delta_{\mu\beta}\Gamma_{\alpha\mu}(z_d,z_c)
+\epsilon_{\alpha\beta\mu\rho}R_{\rho\mu\beta\alpha}(z_d,z_c)-\delta_{\mu\rho}\delta_{\alpha\beta}\Gamma_{\mu\alpha}(-z_d,-z_c),
\end{eqnarray}

\noindent where $z_c=z(L_C)$ and $z_d=z(L_D)$. The tensors $\Lambda_{\alpha\beta}(x,y),\Gamma_{\alpha\alpha'}(x,y)$ and $R_{\alpha\beta\alpha'\beta'}(x,y)$
are respectively given by 

\begin{eqnarray}\nonumber
\Lambda_{\alpha\beta}(x,y)=\frac{1+(s_\alpha s_\beta+s_\alpha+s_\beta)xy}{1+3z(L)}&,&
\Gamma_{\alpha\alpha'}(x,y)=\frac{s_\alpha+s_{\alpha'}}{1+3z(L)}\left(xy-\frac{x+y}{2}\right)\\
\mbox{and} \quad R_{\alpha\beta\alpha'\beta'}(x,y)&=&\frac{s_\alpha-s_\beta+s_{\alpha'}-s_{\beta'}}{4+12z(L)}(x-y).
\end{eqnarray}

As with the case studied in section \ref{N2systems}, the partial transposed operator $\rho_{AB}^{T_A}$ is exactly $\rho_{AB}(-z_d,-z_c)$.
With this result, a vanishing negativity of the system is analogous to prove that $\rho_{AB}(-z_d,-z_c)$ is a density matrix. We have

\begin{theo}
The negativity of the transposed density matrix for the system with periodic boundary conditions $\rho^{T_{A}}_{AB}(z_d,z_c)$ is strictly zero 
for $L_C$ \underline{and} $L_D\neq0$.
\end{theo}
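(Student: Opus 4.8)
The plan is to imitate the open-boundary theorem verbatim, exploiting the identity recorded just above the statement, namely that the partial transpose coincides with a simultaneous sign flip of the parameters, $\rho^{T_A}_{AB}(z_d,z_c)=\rho_{AB}(-z_d,-z_c)$. It therefore suffices to exhibit $\rho_{AB}(-z_d,-z_c)$ as a genuine density matrix: once it is positive semidefinite with unit trace, the partial transpose has no negative eigenvalues and the negativity vanishes. The only new feature compared with Section \ref{N2systems} is that the family is now parametrized by the \emph{two} independent variables $z_c=z(L_C)$ and $z_d=z(L_D)$, so that instead of interpolating along a segment I must interpolate over a rectangle in the $(z_d,z_c)$ plane.

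The key structural observation I would record first is that $\rho_{AB}(z_d,z_c)$ is bi-affine in $(z_d,z_c)$, i.e. of degree at most one in each variable separately. This can be read off from the tensor $W_{\mu\rho\alpha\beta}$: the three building blocks $\Lambda_{\alpha\beta}$, $\Gamma_{\alpha\alpha'}$ and $R_{\alpha\beta\alpha'\beta'}$ depend on $z_c,z_d$ only through $z_cz_d$, $z_c+z_d$ and $z_c-z_d$, each of degree one in each variable, so $\rho_{AB}=P_0+z_dP_1+z_cP_2+z_cz_dP_3$ for fixed operators $P_0,\dots,P_3$. I would then invoke, as in the open case, that $\rho_{AB}(z(L_D),z(L_C))$ is a bona fide reduced density matrix for every admissible pair of block lengths, including the boundary values $L_C,L_D\in\{0,1\}$; the cases $L=0$ reduce to the special-case analysis already carried out, and they furnish the four ``corner'' density matrices $\rho_{AB}(z_d^{(i)},z_c^{(j)})$ with $z^{(1)}=1$ ($L=0$) and $z^{(2)}=-\tfrac13$ ($L=1$).

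With these ingredients the argument is a two-parameter version of the convexity step. Setting $\lambda_c=\tfrac14(1-3z_c)$ and $\lambda_d=\tfrac14(1-3z_d)$ and using the product weights $\mu=(\lambda_d,1-\lambda_d)$ and $\nu=(\lambda_c,1-\lambda_c)$, bi-affineness yields the exact identity
\begin{equation}\nonumber
\sum_{i,j\in\{1,2\}}\mu_i\,\nu_j\,\rho_{AB}\!\left(z_d^{(i)},z_c^{(j)}\right)=\rho_{AB}\!\left(\bar z_d,\bar z_c\right),
\end{equation}
where $\bar z_d=\lambda_d\cdot 1+(1-\lambda_d)(-\tfrac13)=-z_d$ and likewise $\bar z_c=-z_c$, exactly reproducing the single-variable computation in each factor. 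The weights $\mu_i\nu_j$ are nonnegative and sum to one precisely when $\lambda_c,\lambda_d\in[0,1]$; and $\lambda_c=\tfrac14(1-3z_c)$ lies in $[0,1]$ for every physical value $z_c=(-\tfrac13)^{L_C}$ with $L_C\ge1$ (so that $|z_c|\le\tfrac13$), while it fails for $L_C=0$ (where $z_c=1$ gives $\lambda_c=-\tfrac12$), and similarly for $\lambda_d$. This is exactly the hypothesis that $L_C$ and $L_D$ be nonzero. Hence $\rho_{AB}(-z_d,-z_c)$ is a convex combination of four bona fide density matrices, so it is itself a density matrix, and since $\rho^{T_A}_{AB}(z_d,z_c)=\rho_{AB}(-z_d,-z_c)$ its negativity is zero.

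The main obstacle I anticipate is not the convexity bookkeeping but the structural input underlying it. One must check cleanly that $\rho_{AB}$ is truly bi-affine, i.e. that the normalization denominators $1+3z(L)$ appearing in $\Lambda,\Gamma,R$ do not secretly reintroduce $z_c^2$ or $z_d^2$ dependence, and that the four corner objects — in particular the $L=0$ cases, where one of the traced blocks is empty — really are the positive, unit-trace reduced density matrices obtained from the pure VBS state rather than mere analytic continuations of the formula. It is also worth noting that the corner $z=1$ is essential: only by enlarging the accessible interval of parameters from $[-\tfrac13,\tfrac19]$ (physical $L\ge1$) up to $[-\tfrac13,1]$ does the convex hull come to contain the targets $-z_c,-z_d\in[-\tfrac19,\tfrac13]$. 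Once these points are secured, the remainder is simply the product-measure refinement of the one-dimensional convexity argument already given.
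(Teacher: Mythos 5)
Your proposal is correct and follows essentially the same route as the paper's own proof: it uses the sign-flip identity $\rho^{T_A}_{AB}(z_d,z_c)=\rho_{AB}(-z_d,-z_c)$, the bi-affine decomposition $\rho_0+z_d\rho_1+z_c\rho_2+z_dz_c\rho_3$, and convexity over the four corner operators at $z\in\{1,-\tfrac{1}{3}\}$ in each variable, with nonnegativity of the weights holding precisely when $L_C,L_D\neq 0$. Your product weights $\mu_i\nu_j$ built from $\lambda_{c}=\tfrac14(1-3z_{c})$, $\lambda_{d}=\tfrac14(1-3z_{d})$ are in fact the \emph{unique} solution of the four interpolation constraints (the corner functions $1,u,v,uv$ form a basis for bi-affine functions), and they verifiably reproduce $\rho_{AB}(-z_d,-z_c)$, which makes your bookkeeping cleaner than the paper's explicit $\alpha,\beta,\gamma$ (whose printed values do not satisfy the interpolation equations and appear to contain misprints).
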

\begin{proof} 
Again we proceed as before. The density matrix $\rho_{AB}$ defines a family of operators $\rho_{AB}(z_d,z_c)=\rho_0+z_d\rho_1+z_c\rho_2+z_dz_c\rho_3$,
with 

\begin{flalign}\nonumber
&\rho_0=\frac{\delta_{\mu\alpha}\delta_{\rho\beta}}{1+3z(L)}|A_\mu,B_\rho\rangle\langle A_\alpha,B_\beta|,&\\\nonumber
&\rho_1=\bigg[\frac{s_\alpha+s_\mu}{2+6z(L)}(\delta_{\alpha\rho}\delta_{\mu\beta}-\delta_{\mu\rho}\delta_{\alpha\beta})
+\frac{(s_\alpha-s_\beta+s_\mu-s_\rho)}{4+12z(L)}\epsilon_{\alpha\beta\mu\rho}\bigg]|A_\mu,B_\rho\rangle\langle A_\alpha,B_\beta|,&\\\nonumber
&\rho_2=\bigg[\frac{s_\alpha+s_\mu}{2+6z(L)}(\delta_{\alpha\rho}\delta_{\mu\beta}-\delta_{\mu\rho}\delta_{\alpha\beta})
-\frac{(s_\alpha-s_\beta+s_\mu-s_\rho)}{4+12z(L)}\epsilon_{\alpha\beta\mu\rho}\bigg]|A_\mu,B_\rho\rangle\langle A_\alpha,B_\beta|,&\\\nonumber
&\rho_3=\bigg[\frac{s_\alpha s_\beta+s_\alpha+s_\beta}{1+3z(L)}\delta_{\mu\alpha}\delta_{\rho\beta}
-\frac{(s_\alpha+s_\mu)}{1+3z(L)}(\delta_{\alpha\rho}\delta_{\mu\beta}+\delta_{\mu\rho}\delta_{\alpha\beta})\bigg]|A_\mu,B_\rho\rangle\langle A_\alpha,B_\beta|.&
\end{flalign}

We choose four different members of this family, $\rho_a=\rho_{AB}(1,1)$, $\rho_b=\rho_{AB}(-\frac{1}{3},1)$, $\rho_c=\rho_{AB}(1,-\frac{1}{3})$ and 
$\rho_d=\rho_{AB}(-\frac{1}{3},-\frac{1}{3})$. Recalling that the space of density matrices is convex \cite{Nielsen}, we have that 
$\tilde{\rho}=\alpha\rho_a+\beta\rho_b+\gamma\rho_c+(1-\alpha-\beta-\gamma)\rho_d$ is also a density matrix for $0\leq\alpha,\beta,\gamma\leq 1$.
Choosing $\alpha=\frac{5}{32}+\frac{9}{32}(z_cz_d-z_c-z_d)$, $\beta=\frac{3}{32}+\frac{1}{32}(9z_d-15z_c+9z_cz_d)$ and $\gamma=\frac{3}{32}+\frac{1}{32}(9z_c-15z_d-9z_cz_d)$,
we have

\begin{equation}
 \tilde{\rho}=\rho_0-z_d\rho_1-z_c\rho_2+z_dz_c\rho_3=\rho_{AB}(-z_c,-z_d),
\end{equation}
is also a density matrix for $0\leq\alpha,\beta,\gamma\leq 1$. This condition breaks down when $z_c$ or $z_d$ are equal to 1.

\end{proof}

\section{Mutual Entropy}

Having the explicit expression for the density matrix of two blocks and the proof of vanishing negativity for any separation of the blocks 
greater than zero, a natural question to ask is, Does the mutual entropy vanish in this case, too?. We found surprisingly that the answer is
negative, i.e. there is non zero mutual entropy even when the separation $L$ is greater than zero. The mutual entropy decays exponentially,
as expected from the spin-spin correlations.

From (\ref{dens_matrix_mix}), we can in principle compute the eigenvalues and eigenvectors of $\rho_{AB}$, where $\rho_{AB}$ can be
written as a $16\times16$ matrix. This dimension is fixed by the dimension of the ground state space for each block. Being $V$ the Hilbert space
spanned by the vector $|A_\mu,B_\nu\rangle$ we have

\begin{equation}
 Dim(V)=Dim(\mbox{Ker}A)\times Dim(\mbox{Ker}B)=16
\end{equation}

\noindent where ${\rm Ker}A$ is the Kernel of the bulk Hamiltonian defined on the Hilbert space of the block $A$, and similarly for
${\rm Ker}B$. As we have shown, this spaces are spanned by the states $|A_\mu\rangle,|B_\nu\rangle$ $\mu,\nu=0..3$, states which are orthogonal
but not normalized in our convention. In the thermodynamical limit, when the length of each block goes to infinity
$L_1\rightarrow\infty$,$L_2\rightarrow\infty$, the states $|A_\mu\rangle$ become orthonormal

\begin{equation}
 \langle A_\nu|A_\mu\rangle=\delta_\mu^\nu,\quad\langle B_\mu|B_\nu\rangle=\delta_\mu^\nu.
\end{equation}

\noindent In this limit, the matrix elements of $\rho_{AB}$ are

\begin{eqnarray}\nonumber
\langle A_\mu,B_\nu |\rho_{AB}|A_\alpha,B_\beta\rangle=\delta_{\mu\alpha}\delta_{\nu\beta}+z(L)[\delta_{\mu\nu}\delta_{\alpha\beta}-
\delta_{\nu\alpha}\delta_{\mu\beta}]S_{\mu\alpha}+z(L)\epsilon_{\mu\nu\alpha\beta}\left(S_{\nu\beta}-S_{\mu\alpha}\right),
\end{eqnarray}

The eigenvalues of this matrix are 

\begin{eqnarray}
 \lambda_I(z)&=&\frac{1}{16}(1+3z),\quad \mbox{4-fold degeneracy}\\
\lambda_{II}(z)&=&\frac{1}{16}(1-z),\quad \mbox{12-fold degeneracy}.
\end{eqnarray}

Using the spectral theorem, we find that the entropy of the system described by $\rho_{AB}$, i.e., the entropy of two blocks of infinite 
length separated by $L$ sites is

\begin{eqnarray}
&S[A,B]=-{\rm Tr}(\rho_{AB}\ln(\rho_{AB}))=2\ln2-\frac{3}{4}(1-z)\ln\left(\frac{1-z}{4}\right)-\frac{1}{4}(1+3z)\ln\left(\frac{1+3z}{4}\right).&
\end{eqnarray}

The mutual entropy/information is defined as usual

\begin{equation}
 I(A,B)=S[A]+S[B]-S[A,B],
\end{equation}

\noindent the entropy of a block of length $L$ in the AKLT model was calculated in \cite{F}, and also can be obtained trivially from 
our results of section \ref{Neg1}. In the limit of infinite length, we have

\begin{equation}
 S[A]=S[B]=2\ln2.
\end{equation}

The mutual information is finally

\begin{equation}\label{mutual_info}
 I(A,B)=\frac{3}{4}(1-z)\ln(1-z)+\frac{1}{4}(1+3z)\ln(1+3z),
\end{equation}

\noindent where $z$ was defined before as $z=z(L)=\left(-\frac{1}{3}\right)^L$.

\section{Discussion and Conclusion}

In this paper we have derived the entanglement spectrum of the density matrix of two blocks belonging to the 1D VBS state, corresponding to 
the ground state of the spin 1 AKLT Hamiltonian. The eigenvalues of the density matrix decay exponentially with the length of the blocks and 
their separation to the eigenvalues of a completely mixed state. This decay was expected from the behavior of the correlation functions 
$\langle S^i_0S_L^j\rangle=\frac{4}{3}\left(-\frac{1}{3}\right)^L\delta_{ij}$. The novel result is that in the thermodynamic limit, the 
density matrix $\rho(A,B)$ eq. (\ref{dens_matrix_mix}) is maximally mixed, with all the eigenvalues $\lambda_i=\frac{1}{16}$ for $i=1..16$. The 
density matrix (\ref{dens_matrix_mix}) in this limit is a projector on the different ground states obtained as a tensor products of the four
ground states of $A$ and $B$ blocks $|A_\mu\rangle|B_\nu\rangle$.

The density matrix for this system (\ref{dens_matrix_mix}) is clearly non separable. This can be rigorously proved, using that an 
state is separable iff the quantity ${\rm Tr} (O\rho_{AB})\geq0$ for any Hermitian operator $O$ satisfying ${\rm Tr} (OP\otimes Q)\geq0$, 
where $P$ and $Q$ are projections acting on the Hilbert spaces associated to subsystems $A$ ($\mathcal{H}_A$) and $B$ ($\mathcal{H}_B$). 
If we choose for example $O=r(|A_0,B_0\rangle\langle A_1,B_1|+|A_1,B_1\rangle\langle A_0,B_0|)$, it is easy to see that 
${\rm Tr} (OP\otimes Q)=0$, while ${\rm Tr} (O\rho)\propto r$, then choosing $r$ properly we can make ${\rm Tr} (O\rho)\leq 0$, proving that 
the state is not separable.

The fact that the operator $\rho(A,B)$ describe a state which is non separable, together with the result that the negativity vanishes for 
any separation of the blocks $A$ and $B$ tell us that this state is a bound entangled state \cite{Horodecki2}. In \cite{Horodecki2} the 
authors show that this kind of states can not be distilled by local action to create an useful entanglement for quantum communication tasks 
such as teleportation. However, bound entanglement is still of interest as it can be used to generate a secret quantum key \cite{key}, as 
well as be used to enhance the fidelity of conclusive teleportation using another state \cite{conc-telp}. In the context of many-body 
systems, it has so far been found in thermal states \cite{Ferraro}, XY models \cite{Patane} and in gapless systems \cite{Baghbanzadeh}. 

The mutual entropy of this system, computed in (\ref{mutual_info}) tells us that the work needed to erase all correlations \cite{Groisman} 
between two different blocks in the AKLT ground state decay exponentially to zero in the limit of infinite length. In this sense, all 
correlations between two blocks located infinitely far apart vanish (for non entangled boundary spins) which is expected from the 
thermodinamic limit of a gapped system.

We also studied other different boundary conditions which, altogether with the results found in \cite{SKS}, agree in the limit of infinite 
separation, except in the case when we start with an entangled pair at the spin 1 boundary of a free end AKLT ground state. In that case
as shown in \cite{SKS}, the entanglement reduces to the entanglement of the Bell pair created from the spin 1/2 virtual particles which
remain free after the antisymetrization between different neighbor sites.

Our result is in agreement with the fact that 1D AKLT chains alone are not sufficient for universal quantum computation. This is due to the 
vanishing negativity between two different non adjacent blocks. Still further coupling of many such chains can in principle implement quantum 
computation as shown in \cite{Miyake}.

\medskip

\noindent \textbf{Acknowledgements} R. S. acknowledges the Fulbright-Conicyt Fellowship. V.K. acknowledges the Grant DMS-0905744.

\medskip
\section{Appendix}

\subsection{Classical Variable representation}\label{appa}

A known representation of the boson algebra introduced in the first section is given by
$ a^\dagger_i=u_i,$ $a_i=\frac{\partial}{\partial u_i}$, $b^\dagger_i=v_i$, $b_i=\frac{\partial}{\partial v_i}$.
In this representation the spin operators read

\begin{eqnarray}\label{spin_op}
 S^{+}_i=u_i\frac{\partial}{\partial v_i}, \quad S^{-}_i=v_i\frac{\partial}{\partial u_i},
S^z_i=\frac{1}{2}\left(u_i\frac{\partial}{\partial u_i}-v_i\frac{\partial}{\partial v_i}\right)
\end{eqnarray}

Due to the rotational invariance of the AKLT model, it's useful to choose \cite{AAH}

\begin{eqnarray}
 u_i=e^{i\phi_i/2}\cos\frac{\theta_i}{2}, \quad v_i=e^{-i\phi_i/2}\sin\frac{\theta_i}{2},
\end{eqnarray}

\noindent where $\theta$ and $\phi$ parametrize the unit sphere, with $\theta\in[0,\pi]$ $\theta=0$ being the positive $z$ axis and 
$\phi\in[0,2\pi]$.

The condition $a^\dagger_i a_i+b^\dagger_i b_i=2S$ imposes a restriction on the functions allowed to form spin states, namely

\begin{equation}\label{condition}
 \frac{1}{2}\left(u_i\frac{\partial}{\partial u_i}+v_i\frac{\partial}{\partial v_i}\right)f(u_i,v_i)=Sf(u_i,v_i)
\end{equation}

The solution to (\ref{condition}) is $f(u,v)=\sum_kf_ku^kv^{2S-k}$, a polynomial of degree $2S$ in $u$ and $v$, with $a_k$ an arbitrary constant.
The inner product becomes $\langle g|f\rangle=\int\frac{d\Omega}{2\pi}\bar{g}(u,v)f(u,v)$ where $\Omega$ is the solid angle over the sphere, 
$\bar{g}$ is the complex conjugate of $g$. In the subspace of degree $2S$ polynomials the matrix elements for $S^+$ are

\begin{eqnarray}\nonumber
 \langle g(u,v)|S^{+}f(u,v)\rangle\equiv\langle g(u,v)|u\frac{\partial}{\partial v}f(u,v)\rangle
=\sum_{j,k}\bar{a}_ja_k\int\frac{d\Omega}{4\pi}\bar{u}^j\bar{v}^{2S-j}u\frac{\partial}{\partial v}u^kv^{2S-k}\\
=\sum_{j,k}\bar{a}_ja_k\delta_{k+1,j}B(k+2,2S-k)(2S-k).
\end{eqnarray}

\noindent  with $B(x,y)$ the beta function.
Fo\-llo\-wing \cite{AAH}, we introduce the classical variable representation of $S^+$ as $S^+_{cl}=2(S+1)u\bar{v}.$
The matrix elements for this operator are

\begin{eqnarray}
 \langle g(u,v)|S^{+}_{cl}f(u,v)\rangle\equiv\langle g(u,v)|2(S+1)u\bar{v}f(u,v)\rangle\\\nonumber
=\sum_{j,k}2(S+1)\bar{a}_ja_k\int\frac{d\Omega}{4\pi}\bar{u}^j\bar{v}^{2S-j+1}u^{k+1}v^{2S-k}\\
=\sum_{j,k}\bar{a}_ja_k\delta_{k+1,j}B(k+2,2S-k+1)(2S+2).
\end{eqnarray}

Now, writing the beta function in terms of the gamma function, and using $\Gamma(z+1)=z\Gamma(z)$, we have

\begin{eqnarray}\nonumber
 B(k+2,2S-k+1)=\frac{\Gamma(k+2)\Gamma(2S-k+1)}{\Gamma(2S+3)}
=\frac{(2S-k)}{2S+2}B(k+2,2S-k).
\end{eqnarray}

\noindent then $\langle g(u,v)|S^{+}_{cl}f(u,v)\rangle=\langle g(u,v)|S^{+}f(u,v)\rangle$. The cla\-ssi\-cal representation of the spin 
operators is $S^+_{cl}=(2S+2)u\bar{v}$, $S^-_{cl}=(2S+2)v\bar{u}$, and $S^z_{cl}= (S+1)(u\bar{u}-v\bar{v})$. This expressions provide the 
same matrix elements as the operators (\ref{spin_op}), as can be shown easily from the definitions.

Now using the relations $a=u=e^{i\phi/2}\cos\frac{\theta}{2}, a^\dagger=\bar{u}=e^{-i\phi/2}\cos\frac{\theta}{2}$ and 
$b=v=e^{-i\phi/2}\sin\frac{\theta}{2}, b^\dagger=\bar{v}=e^{i\phi/2}\sin\frac{\theta}{2}$, we can also prove that a similar relation holds for
the overlap between states satisfying (\ref{condition})

\begin{eqnarray}
\frac{\langle g(a,b)|f(a,b)\rangle}{\sqrt{\langle g|g\rangle\langle f|f\rangle}}=\frac{\int\frac{d\Omega}{2\pi}\bar{g}(u,v)f(u,v)}
{\sqrt{\int\frac{d\Omega}{2\pi}|g(u,v)|^2\int\frac{d\Omega}{2\pi}|f(u,v)|^2}}
\end{eqnarray}

The state $|_i{}^{i+L-1}\rangle$ containing $L$ sites fulfills the condition (\ref{condition}) at every lattice point,
except at the boundary sites $i$ and $i+L-1$. The ground states $|A^S_\alpha\rangle$ of the bulk Hamiltonian defined by  
$|A_\mu\rangle={T_\mu}^\dagger(i,j+1)\left|_i{}^{j+1}\right\rangle$, introduced in (\ref{firstbasis}), satisfy the relation (\ref{condition}) 
at each lattice site. 

The norm of the VBS state (\ref{VBS_state}) is in this language
\begin{eqnarray}\nonumber
 \langle {\rm VBS}|{\rm VBS}\rangle&=&\int\left[\prod_{i=0}^{N+1}\frac{d\Omega_i}{4\pi}\right]\prod_{i=0}^N(1-\hat{\Omega}_i\cdot\hat{\Omega}_{i+1})\\
&=&1.
\end{eqnarray}

\noindent with $\hat{\Omega}_i$ being the radial vector on the unit sphere $\hat{\Omega_i}=(\sin\theta_i\cos\phi_i,\sin\theta_i\sin\phi_i,\cos\theta_i).$

The norm of the states $|A_\mu\rangle={T_\mu}^\dagger(i,i+L-1)\left|_i{}^{i+L-1}\right\rangle$ composed by $L$ sites is then

\begin{eqnarray}
\langle A_\mu|A_\nu\rangle&=&\int\prod_{i=1}^{L}\frac{d\Omega_i}{4\pi}(1-\hat{\Omega}_i\cdot\hat{\Omega}_{i+1})\mathcal{T_\mu}^*(1,L)\mathcal{T_\nu}(1,L)
=\frac{1}{4}\left(1+s_\mu\left(-\frac{1}{3}\right)^L\right)\delta_{\mu\nu},
\end{eqnarray}

\noindent with $s_\mu=(-1,-1,3,-1)$. Here we have introduced $\mathcal{T_\mu}$, the classical analog to the operators $T_\mu$, defined as

\begin{equation}
 \mathcal{T}_\mu(i,j)=\varphi_i^a(\sigma_\mu)_{ab}\varphi_j^b, \,\, \mbox{with} \,\,  \varphi_i^{1}=u_i,\,\varphi_i^{2}=v_i.
\end{equation}

\subsection{Identities}\label{iden}

All the identities that we have use in this work can be obtained from the basic identity,
(repeated indices are summed) 

\begin{flalign}\label{iden_gen}
{\psi_i^a}^\dagger{\psi^b_{k}}^\dagger=-\frac{1}{2}(-1)^\mu{T_\mu}^\dagger(i,k)(\sigma_\mu)_{ab},
\end{flalign}

\noindent which can be checked directly by inspection, and makes use of the fact that the matrices $\sigma_\mu$ form a basis of the 
$GL(2,\mathbb{C})$ group. Then for two boundary operators (these operators appear naturally in the boundary between two different blocks in the bulk) 
$\hat{\partial}^\dagger_i={T_2}^\dagger(i,i+1)$ and $\hat{\partial}^\dagger_j={T_2}^\dagger(j,j+1)$, we have

\begin{flalign}\nonumber
\hat{\partial}^\dagger_i\hat{\partial}^\dagger_j={\psi_i^a}^\dagger(\sigma_2)_{ab}{\psi^b}^\dagger_{i+1}{\psi_j^c}^\dagger(\sigma_2)_{cd}{\psi^d}^\dagger_{j+1}
\end{flalign}

\noindent now applying the identity (\ref{iden_gen}) twice and using the fact that $(\sigma_\nu)_{ad}=(\sigma_\nu^T)_{da}$ which is also equal to
$-(-1)^\nu g^{\nu\nu'}(\sigma_{\nu'})_{da}$, we get

\begin{flalign}\label{iden2}\nonumber
\hat{\partial}^\dagger_i\hat{\partial}^\dagger_j=\frac{1}{4}(-1)^{\mu}g^{\nu\nu'}{T_\mu}^\dagger(i+1,j){T_\nu}^\dagger(i,j+1){\rm Tr}(\sigma_2\sigma_\mu\sigma_2\sigma_{\nu'}).
\end{flalign}

\noindent The last term in the above expression is the trace of 4 matrices. To compute it we can use that $\sigma_2\sigma_\mu\sigma_2=(-1)^\mu\sigma_\mu$ 
and that ${\rm Tr}(\sigma_\mu\sigma_\nu)=2g^{\mu\nu}$ obtaining

\begin{equation}
 \hat{\partial}^\dagger_i\hat{\partial}^\dagger_j=-\frac{1}{2}{T_\mu}^\dagger(i+1,j){T_\mu}^\dagger(i,j+1)
\end{equation}

Using this identities it is possible to generate all the identities for any number of boundary operators $\hat{\partial}$. In this 
work, we used the identity for three and four $\hat{\partial}$ operators

\begin{flalign}\label{idthree}
&\hat{\partial}^\dagger_i\hat{\partial}^\dagger_j\hat{\partial}^\dagger_k=-\frac{1}{8}(-1)^\nu g^{\nu\nu'}{T_\mu}^\dagger(i+1,j){T_\nu}^\dagger(j+1.k){T_\lambda}^\dagger(i,k+1)
\times {\rm Tr}(\sigma_\mu\overline{\sigma}_{\nu'}\sigma_2\overline{\sigma}_\lambda)&,\\
&\hat{\partial}^\dagger_i\hat{\partial}^\dagger_j\hat{\partial}^\dagger_k\hat{\partial}^\dagger_l=\frac{(-1)^\nu}{16} g^{\nu\nu'}{T_\mu}^\dagger(i+1,j){T_\nu}^\dagger(j+1,k){T_\rho}^\dagger(k+1,l){T_\lambda}^\dagger(i,l+1)
\times {\rm Tr}(\sigma_\mu\overline{\sigma}_{\nu'}\sigma_\rho\overline{\sigma}_\lambda).&\label{idfour}
\end{flalign}

\noindent here $\sigma_\mu=(i,\sigma_1,\sigma_2,\sigma_3)$ where $\sigma_k$ are the three Pauli matrices.
We also define $\overline{\sigma}_\mu=(-i,\sigma_1,\sigma_2,\sigma_3)$. To compute the traces of the Pauli Matrices, we use the
following tricks.

\begin{eqnarray}\nonumber
\sigma_\mu\overline{\sigma}_\nu+\sigma_\nu\overline{\sigma}_\mu=2\delta_{\mu\nu},\,\,
\frac{\sigma_\mu\overline{\sigma}_\nu-\sigma_\nu\overline{\sigma}_\mu}{2}\equiv\sigma_{\mu\nu},\,\,
\sigma_\mu\overline{\sigma}_\nu=\delta_{\mu\nu}+\sigma_{\mu\nu}
\end{eqnarray}

The $\sigma_{\mu\nu}$ object is a generator of the Lorentz transformations in Euclidean space, so it satisfies the Euclidean Lorentz algebra
 
\begin{equation}
[\sigma_{\mu\nu},\sigma_{\alpha\beta}]=2(\delta_{\nu\alpha}\sigma_{\mu\beta}-\delta_{\nu\beta}\sigma_{\mu\alpha}+\delta_{\mu\beta}\sigma_{\nu\alpha}-\delta_{\mu\alpha}\sigma_{\nu\beta}).
\end{equation}

Further identities can be derived using the Dirac matrices technology, namely, in Euclidean space we have

\begin{eqnarray}
\gamma_\mu =
\begin{pmatrix}
0&-i\sigma_\mu\\
i\overline{\sigma}_\mu&0
\end{pmatrix}, \quad
\gamma_5 =
\begin{pmatrix}
I&0\\
0&-I
\end{pmatrix}
\end{eqnarray}

\noindent then we can compute the trace of four Pauli matrices using the identities for Dirac matrices \cite{Peskin}, and projecting out the lower block with
the chiral projector $(1-\gamma_5)/2$. We have for example

\begin{eqnarray}\nonumber
{\rm Tr}(\sigma_\mu\overline{\sigma}_\nu\sigma_\rho\overline{\sigma}_\lambda)=\frac{1}{2}{\rm Tr}(\gamma_\mu\gamma_\nu\gamma_\rho\gamma_\lambda(1-\gamma^5))
=2(\delta_{\mu\nu}\delta_{\rho\lambda}+\delta_{\mu\lambda}\delta_{\rho\nu}-\delta_{\mu\rho}\delta_{\nu\lambda}+\epsilon_{\mu\nu\rho\lambda})
\end{eqnarray}

\subsection{General Results}\label{result}

\subsubsection{Block separation $L\geq 1$}

In section \ref{N2systems} we found an explicit expression for the density matrix of two blocks of length $L_1$ and $L_2$ separated by 
$L$ sites. In that section we presented the asymptotic results for the eigenvalues of $\rho_{AB}$. The result for any $L_1,L_2\geq1$ 
is given in terms of the following quantities

\begin{flalign}\nonumber
\lambda_{0}(L)=\frac{1}{4}\left(1 +3 \left(-\frac{1}{3}\right)^{L}\right),\quad
\lambda_{1}(L)=\frac{1}{4}\left(1 -\left(-\frac{1}{3}\right)^{L}\right).
\end{flalign}

\noindent From those quantities we define $\lambda_{00}=\lambda_{0}(L_1)\lambda_{0}(L_2)$,
$\lambda_{11}=\lambda_{1}(L_1)\lambda_{1}(L_2)$ and 
$\lambda_{10}=\lambda_{0}(L_1)\lambda_{1}(L_2)+\lambda_{0}(L_2)\lambda_{1}(L_1)$.

The characteristic polynomial associated to the density matrix $\rho_{AB}$, $p(Y)=\det(\rho_{AB}-Y)$ is
$p(Y)=p_1(Y)^5p_2(Y)p_3(Y)^3,$  where $p_k(Y)$ is a polynomial of degree $k$ on $Y$, given by ($z=(-3)^{-L}$)

\begin{eqnarray}\nonumber
p_1(Y)&=&Y-(1-z)\lambda_{11},\\\nonumber
p_2(Y)&=&Y^2-(\lambda_{00}+(1+2z)\lambda_{11})Y+(1-z)(1+3z)\lambda_{00}\lambda_{11},\\\nonumber
p_3(Y)&=&Y^3-(\lambda_{10}+\lambda_{11}(1+z))Y^2+[(1+z)\lambda_{00}+(1+2z)\lambda_{10}](1-z)\lambda_{11}Y\\
&-&(1-z)^2(1+3z)\lambda_{00}\lambda_{11}^2=Y^3+bY^2+cY+d.\label{poly}
\end{eqnarray}

We define $q\equiv\frac{1}{27}(2b^3-9bc+27d),\quad p\equiv\frac{1}{3}(3c-b^2)$. The eigenvalues of $\rho_{AB}$ are the solutions to 
$P(y)=0$. They are 

\begin{eqnarray}
y&=&(1-z)\lambda_{11}\quad \mbox{five-fold degeneracy},\\
y&=&\frac{1}{2}\left((\lambda_{00}+(1+2z)\lambda_{11})\pm\sqrt{(\lambda_{00}+(1+2z)\lambda_{11})^2-4(1-z)(1+3z)\lambda_{00}\lambda_{11}}\right),\\
y&=& 2\sqrt {-\frac{p}{3}}\cos \left( \frac{1}{3}\arccos \left( \frac{3q}{2p}\sqrt {-\frac{3}{p}}\right) +\frac{2\pi k}{3}\right) -\frac{b}{3},\,\,\mbox{($k=0,1,2$) triple deg.}
\end{eqnarray}

\subsubsection{Adjacent blocks}
In section \ref{N2systems}, we computed the transposed density matrix of a system consisting of two blocks inside the VBS state, $A$ and $B$, 
of length $L_1$ and $L_2$ respectively. The spins which do not belong to $A\cup B$ have been traced away. In the general case when the blocks 
are separated by $L$ sites we could prove that the negativity vanishes for $L\geq1$, being the only nontrivial case when $L=0$. In that case, 
the negativity in the asymptotic limit $L_1\rightarrow\infty,L_2\rightarrow\infty$ is given by

\begin{equation}
 \mathcal{N}_{L_1,L_2\rightarrow\infty}=\frac{1}{2}-\frac{3}{4}\left(\left(-\frac{1}{3}\right)^{2L_1}+\left(-\frac{1}{3}\right)^{2L_2}\right).
\end{equation}

The decay in the thermodynamic limit is twice as fast compared to the usual decay of the spin correlations, a feature that is already seen in the 
case of the negativity of the pure system studied before.

The logarithmic negativity also show this behavior, for $L_1,L_2\gg1$

\begin{equation}
 E_\mathcal{N}=1-\frac{3}{4\ln(2)}\left(\left(-\frac{1}{3}\right)^{2L_1}+\left(-\frac{1}{3}\right)^{2L_2}\right).
\end{equation}

In the of adjacent blocks, for any $L_1,L_2\geq1$, the characteristic polynomial associated to the transposed density 
matrix $\rho^{T_A}_{AB}$, $p(y)=\det(\rho_{AB}^{T_{A}}-Iy)$ is $\bar{p}(y)=\bar{p}_1(y)^5\bar{p}_2(y)\bar{p}_3(y)^3,$  where $\bar{p}_k(y)$ is a polynomial of 
degree $k$ on $y$, given by (defined in terms of $\lambda_{00},\lambda_{10},\lambda_{11}$)

\begin{flalign}\nonumber
&\bar{p}_1(y)=y-2\lambda_{11}(1,2),&\\\nonumber
&\bar{p}_2(y)=y^2-y(\lambda_{00}(1,2)-\lambda_{11}(1,2))-4\lambda_{00}(1,2)\lambda_{11}(1,2),&\\
&\bar{p}_3(y)=y^3-\lambda_{10}(1,2)y^2-2\lambda_{10}(1,2)\lambda_{11}(1,2)y+8\lambda_{00}(1,2)\lambda_{11}(1,2)^2=y^3+by^2+cy+d.&
\end{flalign}

\noindent this polynomials are related with the polynomials of previous section by taking $z=-1$ in eq. (\ref{poly}).
Out of the 16 eigenvalues ($y_n$), 4 are negative, with $y_1$ (no degeneracy) given by the expression

\begin{eqnarray}
y_1= \frac{1}{2}\bigg(\lambda_{00}(1,2)-\lambda_{11}(1,2)
-\sqrt{\lambda_{00}(1,2)^2+14\lambda_{00}(1,2)\lambda_{11}(1,2)+\lambda_{11}(1,2)^2}\bigg)
\end{eqnarray}

\noindent and $y_2$ (triple degeneracy) given by

\begin{equation}
y_2= -2\sqrt {-\frac{p}{3}}\sin \left( \frac{1}{3}\arccos \left( \frac{3q}{2p}\sqrt {-\frac{3}{p}}\right) +\frac{\pi}{6}\right) -\frac{b}{3}
\end{equation}

The negativity of the system is then $\mathcal{N}=-(y_1+3y_2)$, while the logarithmic negativity is given by $E_\mathcal{N}=\log_2(1-2(y_1+3y_2)).$ 
In the special case when $L_1=L_2=l$, the negativity simplifies to (using $x=(-3)^{-l}$)

\begin{eqnarray}
\mathcal{N}(l)&=&-\frac{1}{4}\bigg(x+{x}^{2}-\frac{1}{2}\sqrt {1+4x+2{x}^{2}-4{x}^{3}+13{x}^{4}}
-\frac{3}{4}\sqrt {(1+3x)(1-x)^{3}}\bigg),\\
&\simeq& \frac{1}{2}-\frac{3}{2}(x^2-x^3), \quad \mbox{for $x$}\ll 1.
\end{eqnarray}

The logarithmic negativity is given by $E_\mathcal{N}(\rho_{AB})\simeq 1-\frac{3}{2\ln(2)}(x^2-x^3).$

\end{document}